\newtheorem{theorem}{Theorem}
\newtheorem{corollary}{Corollary}
\newtheorem{lemma}{Lemma}
\newtheorem{assumption}{Assumption}
\newtheorem{definition}{Definition}
\newtheorem{remark}{Remark}
\journal{Energy}
\begin{document}

\begin{frontmatter}

\title{Power Limit of Crosswind Kites}

%% Group authors per affiliation:
\author{Haocheng Li}
\address{Aerospace Engineering Program, Worcester Polytechnic Institute}

\begin{abstract}
In this paper, a generalized power limit of the cross wind kite energy systems is proposed. Based on the passivity property of the aerodynamic force, the available power which can be harvested by a cross wind kite is derived. For the small side slip angle case, an analytic result is calculated. Furthermore, an algorithm to calculated the real time power limit of the cross wind kites is proposed.

\end{abstract}

\begin{keyword}
Power Limit \sep Crosswind Kites \sep Available Power \sep Aerodynamics 
\end{keyword}

\end{frontmatter}

\section{Introduction}
Crosswind kite power is an emerging renewable energy technology which uses kites or gliders to generate power in high altitude wind flow. Compared to the conventional wind turbine technology, the crosswind power systems can achieve high-speed crosswind motion, which increase the power density significantly. Additionally, without the support structure, the construction cost of such wind energy system could be reduced compared to conventional towered wind turbines. This could eventually reduce the cost of the wind energy dramatically. 

Based on the power generation modes, the crosswind power can be placed into two categories: the FlyGen systems and the GroundGen systems. FlyGen systems, also referred to as drag mode systems, generate power through on-board devices, such as wind turbine. GroundGen systems, also referred to as lift mode systems, generate power through tether tension and power generator on the ground. Different industrial prototypes have been proposed for the airborne wind harvesting,  

Although  the theoretical power limit of crosswind kites is the most fundamental issue of such systems. In \cite{Loyd}, Miles Loyd first proposed a power limit for the crosswind kite systems. Diehl provide a refined version of the power limit in \cite{AWEbookChap1} by considering the turbine power generation. 
However, both Loyd and Diehl's works are based on kite motion in a two-dimensional wind field and no side force has been taken into account in their analysis.
In this paper, I propose the theoretical limit of the crosswind kite systems in three-dimensional wind field, and the following contributions are made. First, a general power loss of the crosswind kite system is derived based on the passivity analysis of the classical aerodynamic model. Based on the power loss calculation, the available power of a crosswind kite is derived, and a nonlinear optimization is presented to determine the power limit.

\section{Available Power}
In this section, the available power of the cross wind kites will be derived. First, a classical aerodynamic model is presented and  the passivity property of the aerodynamic force is then proven. Physically, the passivity of the aerodynamic force represents the power dissipated by the aerodynamic force. Then available power of the crosswind kite systems is then derived based on this property of the aerodynamic force. The derivation presented in this section serves as a foundation of the power limit calculation of the cross wind kite in next section. 

\subsection{Classical Model of Aerodynamics}
As in classical aerodynamics, \cite{Anderson} the following two reference frames are used to describe the three-dimensional kite motion,
\begin{itemize}
\item Cartesian Earth Frame \textbf C: $\begin{pmatrix}
\mathbf i_C & \mathbf j_C & \mathbf k_C
\end{pmatrix}$
\item Body Frame \textbf B: $\begin{pmatrix}
\mathbf i_B & \mathbf j_B & \mathbf k_B
\end{pmatrix}$
\end{itemize}
In crosswind kite systems, the Cartesian earth frame \textbf C is often assumed to center at anchor point of the tether. The $\mathbf i_C$ axis points to the upstream direction and $\mathbf j_C$ points vertical downwards. The $\mathbf j_C$ axis forms a right-hand coordinate system with $\mathbf i_C$ and $\mathbf k_C$ axes. The body frame \textbf B, which centers at the gravitational center of the kite, follows the North-East-Down conventions. 

In this work, we will denote the rotational transformation matrix from frame \textbf C to \textbf B as $\mathbf L_{BC}$, which can be represented either by Euler angles or quaternion. The the kite and wind velocity measured in frame \textbf C are denoted as $\mathbf V_C$ and $\textbf W$ respectively, then the kite and wind velocity observed in frame \textbf B are given by,
\begin{align}\label{eq1}
\mathbf V_B = \mathbf L_{BC}\mathbf V_C,\hspace{1em} \mathbf W_B = \mathbf L_{BC}\mathbf W_C.
\end{align}
It is well known that the rotational transformation matrices are orthogonal, i.e., it inverse transformation, denoted as $\mathbf L_{CB}$, is its transpose,
\begin{align}\label{eq2}
\mathbf L_{CB} = \mathbf L_{BC}^{-1} = \mathbf L_{BC}^T.
\end{align}
The kite apparent velocity, which is denoted as $\mathbf V_a$, is the key in determining the aerodynamic force acting on kites. Using the notation in equations \eqref{eq1} and \eqref{eq2},
\begin{align}\label{eq3}
\mathbf V_a = \mathbf L_{BC}(\mathbf V_C - \mathbf W_C) = \begin{pmatrix} u_a & v_a & w_a \end{pmatrix},
\end{align}
for convenience, $-\mathbf V_a$ may also be referred to as apparent wind velocity. Using kite apparent velocity, $\mathbf V_a$, the kite apparent attitudes, angle of attack $\alpha$ and sideslip angle $\beta$, are given by 
\begin{align}\label{eq4}
\alpha = \arctan\big(\frac{w_a}{u_a}\big), \hspace{1em} \beta = \arcsin\big(\frac{v_a}{V_a}\big),
\end{align}
where $V_a = \|\mathbf V_a\|$ is also refer as the kite apparent speed. 

It is a common practice, \cite{Anderson}, that the kite apparent velocity in $\mathbf i_B$ is assumed to be positive, which is clearly state as follows,
\begin{assumption}\label{asm1}
The kite apparent velocity along $\mathbf i_B$ direction is positive, i.e. $u_a > 0$.
\end{assumption}
In conventional wind energy system, \cite{Bernard}, Assumption \ref{asm1} simply implies that the wind turbine is assumed always facing to the wind during the power harvesting. Under this assumption, the kite apparent velocity $\mathbf V_a$ can be represented by the apparent attitudes, $\alpha$ and $\beta$, in the following way,
\begin{align}\label{eq5}
\mathbf V_a =  \|\mathbf V_a\|\begin{pmatrix}
\cos\alpha\cos\beta & \sin\beta & \sin\alpha\cos\beta
\end{pmatrix}^T.
\end{align}
Equation \eqref{eq5} states that the kite apparent velocity is a function of kite apparent attitudes.
In classical aerodynamics, kite lift and drag coefficients, $C_L$ and $C_D$, are also functions of apparent attitudes,
\begin{align}\label{eq6}
C_L = C_L(\alpha), \hspace{1em} C_D = C_D(\alpha).
\end{align}
By convention, lift and drag coefficients are defined based on the flow direction, i.e. direction of $V_\infty$ shown in Figure \ref{fig1}. The aerodynamic coefficients along the body axes are given by applying appropriate trigonometric transformation,
\begin{align*}
\begin{pmatrix}
C_x \\ C_z
\end{pmatrix} & = \begin{pmatrix}
1 & 0\\
0 & -1
\end{pmatrix}\begin{pmatrix}
\sin\alpha & -\cos\alpha\\
\cos\alpha & \sin\alpha
\end{pmatrix}\begin{pmatrix}
C_L\\C_D
\end{pmatrix},\\
\mathbf C_B & = \begin{pmatrix}
C_x & C_y & C_z
\end{pmatrix}^T - \begin{pmatrix}
C_t & 0 & 0
\end{pmatrix}^T,
\end{align*}
where $C_t$ is the turbine drag coefficient and $C_y$ is the side force coefficient.
In a more compact form, the kite aerodynamic coefficient in frame \textbf B is given by,
\begin{align}\label{eq7}
\mathbf C_B =\begin{pmatrix}C_L\sin\alpha-C_D\cos\alpha - C_t  \\  C_y\\  -C_L\cos\alpha-C_D\sin\alpha
    \end{pmatrix}.
\end{align}
Therefore, the steady aerodynamic force acting on the kite, measured in frame \textbf B, can be calculated as
\begin{align}\label{eq8}
\mathbf A_B = \frac{1}{2}\rho SV_a^2\mathbf C_B,
\end{align}
where $S$ is the kite area and $\rho$ is the air density. Correspondingly, the steady aerodynamic force acting on the kite, measured in frame \textbf C, is given by
\begin{align}\label{eq9}
\mathbf A_C = \frac{1}{2}\rho SV_a^2\mathbf L_{CB}\mathbf C_B,
\end{align}
To this end, all important notations that will be used in the rest of the paper have been introduced. To achieve a more general power limit of crosswind kite, the passivity of the aerodynamic force should be first discussed. 
\begin{figure}
\begin{centering}
\includegraphics[width = 0.4\textwidth]{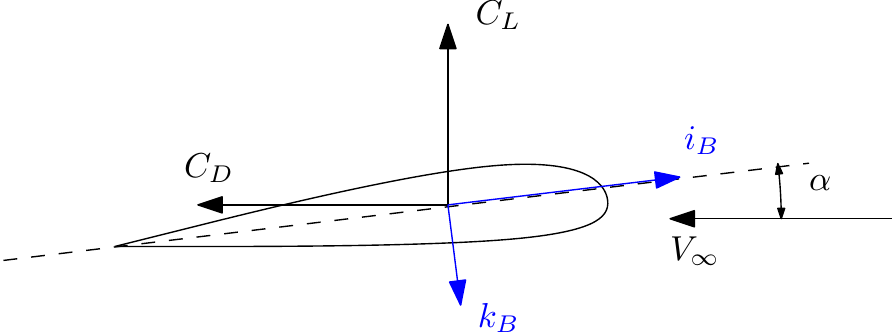}
\caption{Lift and Drag Coefficients}
\label{fig1}
\end{centering}
\end{figure}

\subsection{Power Loss}
To establish the passivity property of the aerodynamic model given in equations \eqref{eq8} and \eqref{eq9}, some definitions in nonlinear system theory, \cite{Khalil}, need to be reviewed first.
Consider a function, $\mathbf y = \mathbf h(\mathbf u)$, where $\mathbf y$ and $\mathbf u$ are input and output signal with compatible dimensions.
Then, the passivity of function $\mathbf h(\mathbf u)$ is defined formally as follows, 
\begin{definition}\label{def1}
A function $\mathbf y = \mathbf h(\mathbf u)$ is Passive if $\mathbf u^T\mathbf y \geq 0$. The function is strictly input passive if there exist a function $\boldsymbol\varphi(\mathbf u)$, with proper dimension, such that $\mathbf u^T\boldsymbol\varphi(\mathbf u) > 0$ for all $\mathbf u \neq \mathbf 0$ and $\mathbf u^T\mathbf y \geq \mathbf u^T\boldsymbol\varphi(\mathbf u)$.
\end{definition}
The classical aerodynamics model introduced in the previous section, i.e  equations \eqref{eq3} - \eqref{eq8}, can be represented using block diagram shown in Figure \ref{fig2}.
\begin{figure}[htb]
\centering
\includegraphics[width = 0.45\textwidth]{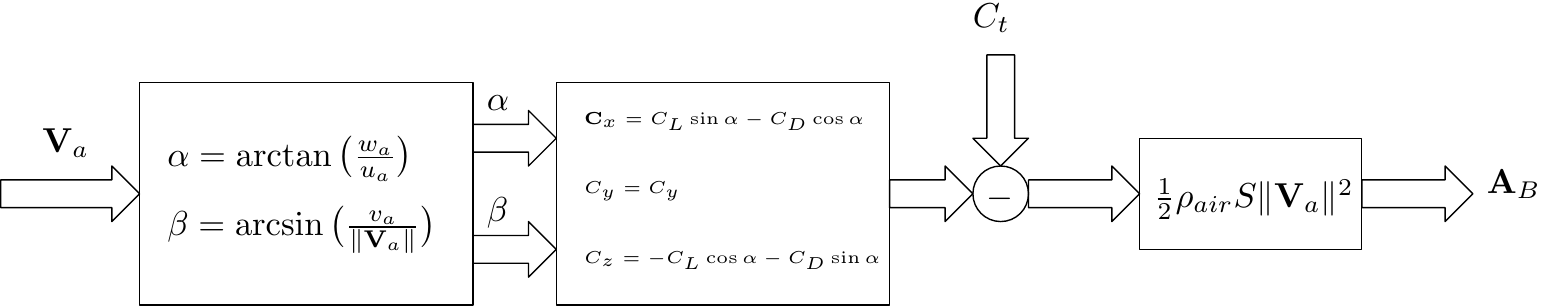}
\caption{Block Diagram of Aerodynamic Model}
\label{fig2}
\end{figure}
The kite apparent velocity, $\mathbf V_a$, can be treated as input to the model while the aerodynamic force in frame \textbf B, $\mathbf A_B$, is the output of the model. Then the following lemma can be proven,
\begin{lemma}\label{lem1}
Under the Assumption \ref{asm1}, the aerodynamic force is strictly input passive with respect to apparent wind velocity. 
\begin{align}\label{eq10}
    &- \mathbf V_a^T\mathbf A_B =  - \frac{1}{2}\rho\|\mathbf{V}_a\|^3SC_a>0.
\end{align}
where $C_a$ is given by,
\begin{align*}
    C_a = -C_D\cos\beta+C_y\sin\beta -C_t\cos\alpha\cos\beta.
\end{align*}
\end{lemma}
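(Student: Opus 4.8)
The plan is to prove \eqref{eq10} in three stages: first the algebraic identity $\mathbf V_a^T\mathbf A_B=\tfrac12\rho S\|\mathbf V_a\|^3C_a$, then the strict sign $C_a<0$, and finally to read off strict input passivity in the sense of Definition \ref{def1} with input the apparent wind velocity $-\mathbf V_a$ and output $\mathbf A_B$.

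For the identity, I would use \eqref{eq8} to write $\mathbf V_a^T\mathbf A_B=\tfrac12\rho SV_a^2\,\mathbf V_a^T\mathbf C_B$, then substitute the explicit parametrization \eqref{eq5} of $\mathbf V_a$ and the expression \eqref{eq7} for $\mathbf C_B$ and expand the dot product componentwise. The two contributions containing $C_L$ — namely $C_L\sin\alpha\cos\alpha\cos\beta$ coming from the $\mathbf i_B$ component of $\mathbf C_B$ and $-C_L\sin\alpha\cos\alpha\cos\beta$ coming from the $\mathbf k_B$ component — cancel identically, which is exactly the statement that lift does no work relative to the apparent wind. The surviving $C_D$ terms combine through $\sin^2\alpha+\cos^2\alpha=1$ into $-C_D\cos\beta$, the turbine term yields $-C_t\cos\alpha\cos\beta$, and the side-force term yields $+C_y\sin\beta$. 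Collecting these gives $\mathbf V_a^T\mathbf C_B=\|\mathbf V_a\|C_a$ with $C_a$ as stated, hence $-\mathbf V_a^T\mathbf A_B=-\tfrac12\rho S\|\mathbf V_a\|^3C_a$.

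For the strict inequality it suffices to show $C_a<0$ whenever $\mathbf V_a\neq\mathbf 0$. By Assumption \ref{asm1}, $u_a=\|\mathbf V_a\|\cos\alpha\cos\beta>0$, so $\cos\alpha\cos\beta>0$ and in particular $\cos\beta\neq0$; since $\beta=\arcsin(v_a/V_a)$ lies in $[-\tfrac\pi2,\tfrac\pi2]$, this forces $\cos\beta>0$. Because the profile drag coefficient satisfies $C_D>0$ and the turbine drag coefficient obeys $C_t\ge0$, the terms $-C_D\cos\beta$ and $-C_t\cos\alpha\cos\beta$ are negative and non-positive respectively. The only term whose sign is not automatic is $C_y\sin\beta$; here I would invoke the physical character of the side force, which acts to reduce sideslip, so that $C_y$ and $\beta$ carry opposite signs (the lateral analogue of drag opposing forward motion) and $C_y\sin\beta\le0$. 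Therefore $C_a<0$ and $-\tfrac12\rho S\|\mathbf V_a\|^3C_a>0$, which is \eqref{eq10}.

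To phrase this as strict input passivity per Definition \ref{def1}, I would exhibit $\boldsymbol\varphi(-\mathbf V_a)=\tfrac12\rho S c_0\|\mathbf V_a\|(-\mathbf V_a)$ with $c_0:=\inf(-C_a)>0$ taken over the admissible attitude range; then $(-\mathbf V_a)^T\boldsymbol\varphi(-\mathbf V_a)=\tfrac12\rho S c_0\|\mathbf V_a\|^3>0$ for $\mathbf V_a\neq\mathbf 0$, while the value just computed satisfies $-\mathbf V_a^T\mathbf A_B=\tfrac12\rho S\|\mathbf V_a\|^3(-C_a)\ge(-\mathbf V_a)^T\boldsymbol\varphi(-\mathbf V_a)$. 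I expect the main obstacle to be precisely the side-force term: making rigorous the sign convention on $C_y$ (equivalently, identifying the standing modeling assumption that the side force is dissipative in $\beta$, since if $C_y$ were an unconstrained control input the strict inequality could fail) and, relatedly, guaranteeing $\inf(-C_a)>0$ so that a genuine uniform input-passivity bound — not merely pointwise positivity — holds across the flight envelope.
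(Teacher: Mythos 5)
The paper does not actually prove Lemma~\ref{lem1}; it defers the proof to the cited reference, so your self-contained argument is the natural direct route and your algebra is correct. Expanding $\mathbf V_a^T\mathbf C_B$ with \eqref{eq5} and \eqref{eq7} does give the cancellation of the $C_L\sin\alpha\cos\alpha\cos\beta$ terms, the collapse of the drag terms via $\sin^2\alpha+\cos^2\alpha=1$ to $-C_D\cos\beta$, and the residual $C_y\sin\beta-C_t\cos\alpha\cos\beta$, so the identity $-\mathbf V_a^T\mathbf A_B=-\tfrac12\rho S\|\mathbf V_a\|^3C_a$ holds exactly as you computed. Your deduction that Assumption~\ref{asm1} forces $\cos\alpha\cos\beta>0$ and hence $\cos\beta>0$ (given $\beta\in[-\tfrac\pi2,\tfrac\pi2]$ from the $\arcsin$ in \eqref{eq4}) is also right.

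The one point you correctly flag is genuine: the strict inequality $C_a<0$ does not follow from Assumption~\ref{asm1} together with $C_D>0$, $C_t\ge0$ alone; it additionally requires $C_y\sin\beta\le0$, i.e.\ that the side force is dissipative in sideslip. The lemma as stated silently assumes this, and the paper itself treats it that way --- in the proof of Theorem~\ref{thm2} the author runs the logic in the opposite direction, using the non-positivity of the $C_y\sin\beta$ term in \eqref{eq11} to conclude $C_\beta\le0$. So within the paper's own framework the sign of the side-force contribution is part of the passivity postulate rather than a derived fact, and your observation that the lemma would fail if $C_y$ were an unconstrained input is a fair criticism of the statement, not a gap in your proof. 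Your final step, exhibiting $\boldsymbol\varphi$ with a uniform bound $c_0=\inf(-C_a)>0$ to match Definition~\ref{def1} literally, is more careful than the paper, which conflates pointwise positivity of the supply rate with strict input passivity; just note that the existence of such a $c_0$ needs the attitude range to be compact and bounded away from $|\beta|=\tfrac\pi2$, which is implicit in any realistic flight envelope.
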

Please refer to \cite{ApprntAttudTrck} for detailed proof of this lemma. The passivity of the kite consists of two parts, the power dissipation due to kite structure and the power harvested by the on board turbine. Therefore, 
\begin{remark}\label{rmk1}
The pure power loss due to the kite structure is given by,
\begin{align}\label{eq11}
\mathbf V_a^T\mathbf A_k = \frac{1}{2}\rho\|\mathbf{V}_a\|^3S\big(-C_D\cos\beta+C_y\sin\beta\big),
\end{align}
where $\mathbf A_k = \mathbf A_B$ with $C_t = 0$.
\end{remark}

\subsection{Available Power of Crosswind Kites}
Using the power extraction formula given in \cite{AWEbookChap1}, the total power that can be extracted from the wind in the kite cross wind motion can be computed as follows,  
\begin{align}\label{eq12}
P_{t} = \mathbf W_C^T\mathbf A_C = \mathbf W_B^T\mathbf A_B.
\end{align}
The second equality can be derived by using equations \eqref{eq1}, \eqref{eq2} and \eqref{eq9}. Combining with Remark \ref{rmk1}, the available power in a cross wind motion is
\begin{align}\label{eq13}
    P_a = \mathbf W_B^T\mathbf A_B + \mathbf V_a^T\mathbf A_k.
\end{align}
Denote the wind velocity measured in frame \textbf B as,
\begin{align}\label{eq14}
\mathbf W_B = \begin{pmatrix}
W_x & W_y & W_z
\end{pmatrix}^T.
\end{align}
Then by substituting equations \eqref{eq5}, \eqref{eq7}, \eqref{eq8} and \eqref{eq11} into equation \eqref{eq13}, the available power is given by 
\begin{align}\nonumber
    P_a & = \frac{1}{2}\rho\|\mathbf V_a\|^2S\mathbf W_B^T\mathbf C_B + \mathbf V_a^T\mathbf A_k\\\nonumber
    & = \frac{1}{2} \rho \|\mathbf V_a\|^2 S \Big(W_x(C_L\sin\alpha - C_D\cos\alpha - C_t)  \\\nonumber
    & + W_yC_y - W_z(C_L\cos\alpha + C_D\sin\alpha) \\\label{eq15}
    & + \|\mathbf V_a\|(-C_D\cos\beta + C_y\sin\beta)\Big).
\end{align}
Hence, it is clear that the theoretical power limit of a crosswind kite system can be computed from the following nonlinear optimization problem,
\begin{align}\nonumber
\max_{\alpha, \beta, V_a} \hspace{0.5em} & \frac{1}{2}\rho S\|\mathbf V_a\|^2 \Big(W_x(C_L\sin\alpha - C_D\cos\alpha - C_t) \\\nonumber
& + W_yC_y - W_z(C_L\cos\alpha + C_D\sin\alpha) \\\label{eq16}
& + \|\mathbf V_a\|(-C_D\cos\beta + C_y\sin\beta)\Big).
\end{align}

\section{Theoretical Power Limits}
It worth noting that the available power given in \eqref{eq15} is a nonlinear function of the angle of attack $\alpha$ and side slip angle $\beta$, and the solution to the nonlinear optimization problem \eqref{eq16} is difficult to obtain. However, under certain simplified assumptions, the theoretical power limit of cross wind kite can be computed analytically. In this section, three different theoretical power limits, under three different assumptions, will be derived. First, I will show that the classical Loyd's limit can be derived by assuming the side force acting on the kite is zero and the turbine drag and kite drag force are co-linear. Then, a power limit that only assuming the side force is zero will be derived. Finally, the power limit, under small side slip angle assumption, will also be derived analytically.

\subsection{Loyd's Limit}
In classical aerodynamics, it is reasonable to assume that the side force is negligible when the side slip angle is zero, that is 
\begin{assumption}\label{asm2}
If the side slip angle is zero then the side force coefficient is also zero, i.e. 
\begin{align}\label{eq17}
    C_y = 0 \hspace{1em} \mbox{if} \hspace{1em} \beta = 0
\end{align}
\end{assumption}
\noindent To derive the Loyd's limit, the following assumptions are also necessary,
\begin{assumption}\label{asm3}
The kite drag and turbine drag forces are co-linear.
\end{assumption}
\noindent Under the Assumption \ref{asm3}, the aerodynamic coefficients given in \eqref{eq7} can be simplified as follows,
\begin{align}\label{eq18}
\mathbf C_B =\begin{pmatrix}C_L\sin\alpha - (C_D + C_t)\cos\alpha  \\  C_y\\  -C_L\cos\alpha - (C_D + C_t)\sin\alpha
    \end{pmatrix}.
\end{align}
Therefore, the available power of the kite given in equation \eqref{eq15} can be approximated by
\begin{align}\nonumber
    P_a  = & \frac{1}{2}\rho S\|\mathbf V_a\|^2 \Big(W_x(C_L\sin\alpha - (C_D+C_t)\cos\alpha ) \\\nonumber
& + W_yC_y - W_z(C_L\cos\alpha + (C_D+C_t)\sin\alpha) \\\label{eq19}
& + \|\mathbf V_a\|(-C_D\cos\beta + C_y\sin\beta)\Big).
\end{align}
Under the assumptions \ref{asm2} and \ref{asm3}, the upper bound of the available power can be found as in the following theorem,
\begin{theorem}\label{thm1}
Assume that the sideslip angle $\beta$ is zero and the kite drag is co-linear with the turbine drag, then the available power of the crosswind kite, with on-board turbine, is bounded by 
\begin{align}\label{eq20}
P_a \leq \frac{2}{27}\rho S(\sqrt{W_x^2 + W_z^2})^3\frac{(\sqrt{C_L^2 + (C_D + C_t)^2})^3}{C_D^2}.
\end{align}
Especially, without on-board turbine, the available power is bounded by 
\begin{align}\label{eq21}
P_a \leq \frac{2}{27}\rho S(\sqrt{W_x^2 + W_z^2})^3\frac{(\sqrt{C_L^2 + C_D ^2})^3}{C_D^2}.
\end{align}
\end{theorem}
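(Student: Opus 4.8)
The plan is to take the available‑power formula \eqref{eq19}, impose Assumptions \ref{asm2} and \ref{asm3}, and reduce the maximization in \eqref{eq16} to a trigonometric bound followed by a one‑dimensional optimization over the apparent speed. First I would set $\beta=0$; by Assumption \ref{asm2} this forces $C_y=0$, so the $W_yC_y$ and $C_y\sin\beta$ contributions vanish and $-C_D\cos\beta$ becomes $-C_D$, while Assumption \ref{asm3} is already incorporated in \eqref{eq18}--\eqref{eq19}. Writing $V_a=\|\mathbf V_a\|$, the available power collapses to
\[
P_a=\tfrac12\rho S V_a^2\Big[\big(W_xC_L-W_z(C_D+C_t)\big)\sin\alpha-\big(W_x(C_D+C_t)+W_zC_L\big)\cos\alpha-V_aC_D\Big].
\]

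\noindent Second, for each fixed $\alpha$ the bracketed quantity has the form $a\sin\alpha+b\cos\alpha-V_aC_D$ with $a=W_xC_L-W_z(C_D+C_t)$ and $b=-\big(W_x(C_D+C_t)+W_zC_L\big)$, so it is bounded above by $\sqrt{a^2+b^2}-V_aC_D$. A short expansion shows the cross terms cancel and $a^2+b^2=(W_x^2+W_z^2)\big(C_L^2+(C_D+C_t)^2\big)$; denoting $K=\sqrt{(W_x^2+W_z^2)\big(C_L^2+(C_D+C_t)^2\big)}$ gives $P_a\le\tfrac12\rho S\,(KV_a^2-C_DV_a^3)$. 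Third, I would maximize the cubic $g(V_a)=KV_a^2-C_DV_a^3$ over $V_a>0$: $g'(V_a)=V_a(2K-3C_DV_a)$ vanishes at $V_a^\star=2K/(3C_D)$, which is the maximizer, with $g(V_a^\star)=4K^3/(27C_D^2)$. Substituting and using $K^3=(\sqrt{W_x^2+W_z^2})^3(\sqrt{C_L^2+(C_D+C_t)^2})^3$ yields exactly \eqref{eq20}. The turbine‑free bound \eqref{eq21} then follows by repeating the argument with $C_t=0$ (equivalently, specializing the final inequality). I would also note in passing that the optimal state is consistent with Assumption \ref{asm1}, since with $\beta=0$ one has $u_a=V_a\cos\alpha$, which is positive for the maximizing $\alpha\in(-\pi/2,\pi/2)$.

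\noindent The main obstacle is that $C_L$ and $C_D$ are themselves functions of $\alpha$ through \eqref{eq6}, so $K$ and the coefficient of the cubic depend on $\alpha$; the chain of inequalities above is therefore valid pointwise in $\alpha$, and \eqref{eq20} should be read with $C_L,C_D$ evaluated at the operating angle of attack, the genuine power limit being obtained by further maximizing the right‑hand side over $\alpha$, exactly as in Loyd's treatment. A secondary point to be careful about is that the estimate $a\sin\alpha+b\cos\alpha\le\sqrt{a^2+b^2}$ is used only as an over‑estimate (equality would require a phase not generally compatible with the $\alpha$ that sets $C_L,C_D$), so \eqref{eq20} is an upper bound rather than an attained value — which is all the theorem asserts.
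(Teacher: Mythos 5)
Your proposal is correct and follows essentially the same route as the paper: reduce \eqref{eq19} to the $\beta=0$ form, bound the $\alpha$-dependent part by $\sqrt{(W_x^2+W_z^2)(C_L^2+(C_D+C_t)^2)}$ (your amplitude bound $a\sin\alpha+b\cos\alpha\le\sqrt{a^2+b^2}$ is exactly the paper's Cauchy--Schwarz step on the vectors $\mathbf W_B'$ and $\mathbf C_B'$), and then maximize the cubic in $\|\mathbf V_a\|$ at $\|\mathbf V_a\|^\ast=2K/(3C_D)$. Your closing caveats about the $\alpha$-dependence of $C_L,C_D$ and consistency with Assumption \ref{asm1} are sensible additions but do not change the argument.
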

\begin{proof}
Under the assumption \ref{asm2}, if $\beta = 0$ then $C_y = 0$, and the available power \eqref{eq19} can be simplified as follows,
\begin{align}\nonumber
    P_a  = & \frac{1}{2}\rho S\|\mathbf V_a\|^2 \Big(W_x(C_L\sin\alpha - (C_D+C_t)\cos\alpha ) \\\label{eq22}
  -& W_z(C_L\cos\alpha + (C_D+C_t)\sin\alpha) - \|\mathbf V_a\|C_D \Big).
\end{align}
Notice that
\begin{align}\nonumber
    & W_x(C_L\sin\alpha - (C_D+C_t)\cos\alpha ) \\\label{eq48}
    & - W_z(C_L\cos\alpha + (C_D+C_t)\sin\alpha) = (\mathbf W_B^{'})^T \mathbf C_B^{'},
\end{align}
where $\mathbf W_B^{'}$ and $\mathbf C_B^{'}$ are defined as
\begin{align*}
   \mathbf W_B^{'} & = \begin{pmatrix}
W_x & W_z
\end{pmatrix},\\
\mathbf C_B^{'} & = \begin{pmatrix}
C_L\sin\alpha - (C_D+C_t)\cos\alpha \\ - C_L\cos\alpha - (C_D+C_t)\sin\alpha
\end{pmatrix}.
\end{align*}
It is clear that the norm of $\mathbf C_B^{'}$ and $\mathbf W_B^{'}$ are given by 
\begin{align*}
    \|\mathbf C_B^{'}\| = & \sqrt{C_L^2 + (C_D + C_t)^2}\\
    \|\mathbf W_B^{'}\| = & \sqrt{W_x^2 + W_z^2}
\end{align*}
Therefore the upper bound of the available power $P_a$ is
\begin{align}\nonumber
P_a
\leq & \frac{1}{2}\rho S\|\mathbf V_a\|^2\Big(\sqrt{W_x^2 + W_z^2}\sqrt{C_L^2 + (C_D + C_t)^2}\\ \label{eq24}
& - \|\mathbf V_a\|C_D\Big).
\end{align}
Differentiating the right hand side of \eqref{eq24} with respect to $\|\mathbf V_a\|$ gives, 
\begin{align*}
&\frac{\partial }{\partial \|\mathbf V_a\|}\bigg(\|\mathbf V_a\|^2\Big(\|\mathbf W_B^{'}\|\sqrt{C_L^2 + (C_D + C_t)^2} - \|\mathbf V_a\|C_D\Big)\bigg)\\
 &=  \|\mathbf V_a\|\Big(2\|\mathbf W_B^{'}\|\sqrt{C_L^2 + (C_D + C_t)^2} - 3\|\mathbf V_a\|C_D\Big).
\end{align*}
The maximizing apparent speed can be solved by the following equation,
\begin{align*}
\|\mathbf V_a\|^\ast\Big(2\|\mathbf W_B^{'}\|\sqrt{C_L^2 + (C_D + C_t)^2} - 3\|\mathbf V_a\|^\ast C_D\Big) = 0.
\end{align*}
That is,
\begin{align*}
\|\mathbf V_a\|^\ast = 0 \mbox{ or } \|\mathbf V_a\|^\ast = \frac{2}{3}\|\mathbf W_B^{'}\|\frac{\sqrt{C_L^2 + (C_D + C_t)^2}}{C_D}.
\end{align*}
If $\|\mathbf V_a\|^\ast = 0$, it is clear that the available power $P_a$ is zero, therefore the maximizing apparent speed is 
\begin{align}\label{eq25}
\|\mathbf V_a\|^\ast = \frac{2}{3}\sqrt{W_x^2 + W_z^2}\frac{\sqrt{C_L^2 + (C_D + C_t)^2}}{C_D}.
\end{align}
Substituting equation \eqref{eq25} into \eqref{eq24} gives,
\begin{align}\label{eq26}
P_a \leq \frac{2}{27}\rho S(\sqrt{W_x^2 + W_z^2})^3\frac{(\sqrt{C_L^2 + (C_D + C_t)^2})^3}{C_D^2}.
\end{align}
This is exactly the power limit given by Diehl in \cite{AWEbookChap1}. The Loyd's limit, which is given in \cite{Loyd}, can be obtained by simply setting $C_t = 0$ in inequality \eqref{eq26}.
\end{proof}

It has been shown, in Theorem \ref{thm1}, that Loyd's limit can be derived using the proposed optimization problem under assumptions \ref{asm2} and \ref{asm3}. However, if Assumption \ref{asm3} is relaxed, a different power limit can be achieved.
\begin{corollary}\label{col1}
Assume the sideslip angle $\beta = 0$, then the power limit of the crosswind kite system is given by 
\begin{align}\label{eq27}
    P_a \leq \frac{2}{27}\rho S\frac{(\sqrt{(W_x^2 + W_z^2)(C_L^2 + C_D^2)}-W_xC_t)^3}{C_D^2}.
\end{align}
\end{corollary}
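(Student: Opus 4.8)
The plan is to mirror the proof of Theorem \ref{thm1}, but to keep the turbine drag coefficient $C_t$ outside the Cauchy--Schwarz step: without Assumption \ref{asm3}, $C_t$ enters only the $\mathbf i_B$-component of $\mathbf C_B$ in \eqref{eq7}, so in the available power it appears solely through the stand-alone term $-W_xC_t$. Setting $\beta = 0$ (hence $C_y = 0$ by Assumption \ref{asm2}) in \eqref{eq15}, the available power becomes
\begin{align*}
P_a = \frac{1}{2}\rho S\|\mathbf V_a\|^2\Big( W_x(C_L\sin\alpha - C_D\cos\alpha) - W_z(C_L\cos\alpha + C_D\sin\alpha) - W_xC_t - \|\mathbf V_a\|C_D\Big).
\end{align*}
First I would isolate the two terms $W_x(C_L\sin\alpha - C_D\cos\alpha) - W_z(C_L\cos\alpha + C_D\sin\alpha)$ and, exactly as in \eqref{eq48}, recognize them as an inner product $(\mathbf W_B')^T\mathbf C_B''$ with $\mathbf W_B' = (W_x,\, W_z)$ and $\mathbf C_B'' = (C_L\sin\alpha - C_D\cos\alpha,\; -C_L\cos\alpha - C_D\sin\alpha)$, whose norms are $\|\mathbf W_B'\| = \sqrt{W_x^2 + W_z^2}$ and $\|\mathbf C_B''\| = \sqrt{C_L^2 + C_D^2}$.

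Applying the Cauchy--Schwarz inequality to this inner product then gives
\begin{align*}
P_a \le \frac{1}{2}\rho S\|\mathbf V_a\|^2\Big(\sqrt{(W_x^2 + W_z^2)(C_L^2 + C_D^2)} - W_xC_t - \|\mathbf V_a\|C_D\Big).
\end{align*}
Writing $K := \sqrt{(W_x^2 + W_z^2)(C_L^2 + C_D^2)} - W_xC_t$, the right-hand side is the cubic $\frac{1}{2}\rho S\big(K\|\mathbf V_a\|^2 - C_D\|\mathbf V_a\|^3\big)$ in the apparent speed. If $K \le 0$ then $P_a \le 0$ and \eqref{eq27} holds trivially; otherwise I would differentiate with respect to $\|\mathbf V_a\|$, obtaining $\|\mathbf V_a\|\big(2K - 3C_D\|\mathbf V_a\|\big)$, discard the spurious root $\|\mathbf V_a\| = 0$ (which makes $P_a$ vanish), and keep the maximizer $\|\mathbf V_a\|^\ast = \frac{2K}{3C_D}$. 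Substituting back yields $P_a \le \frac{2}{27}\rho S\,K^3/C_D^2$, which is \eqref{eq27}.

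The point needing the most care — and the main obstacle to a fully rigorous statement — is that $C_L$ and $C_D$ are themselves functions of $\alpha$ via \eqref{eq6}, so the Cauchy--Schwarz step does not genuinely decouple from the choice of $\alpha$: strictly, the estimate reads $P_a \le \max_\alpha \frac{2}{27}\rho S\big(\sqrt{(W_x^2 + W_z^2)(C_L(\alpha)^2 + C_D(\alpha)^2)} - W_xC_t\big)^3/C_D(\alpha)^2$, and \eqref{eq27} should be understood with $C_L, C_D$ evaluated at that optimal angle, following the same convention already adopted in Theorem \ref{thm1}. I would also check that the equality case of Cauchy--Schwarz is attained by an admissible angle of attack, i.e.\ one consistent with Assumption \ref{asm1} ($u_a > 0$), so that the bound is tight rather than merely an over-estimate.
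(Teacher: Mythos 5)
Your proposal follows essentially the same route as the paper's own proof: set $\beta=0$ so $C_y=0$, keep the $-W_xC_t$ term outside the Cauchy--Schwarz estimate of the $\alpha$-dependent inner product, then optimize the resulting cubic in $\|\mathbf V_a\|$ to obtain \eqref{eq27}. Your added caveat that $C_L$ and $C_D$ depend on $\alpha$ (so the bound should strictly be read at the optimizing angle) is a fair observation that the paper itself glosses over, but it does not change the argument.
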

\begin{proof}
If the side-slip angle $\beta = 0$, then the side force coefficient $C_y = 0$. The available power $P_a$, given in equation \eqref{eq15}, can be simplified as follows,
\begin{align}\nonumber
    P_a
    & = \frac{1}{2} \rho \|\mathbf V_a\|^2 S \Big(W_x(C_L\sin\alpha - C_D\cos\alpha - C_t)  \\\label{eq28}
    & - W_z(C_L\cos\alpha + C_D\sin\alpha) - \|\mathbf V_a\|C_D\Big).
\end{align}

The upper bound of $P_a$ over all $\alpha$ can be found using the following inequality,
\begin{align}\nonumber
& W_x(C_L\sin\alpha - C_D\cos\alpha) -  W_z(C_L\cos\alpha + C_D\sin\alpha)\\\nonumber
= & \begin{pmatrix}
W_x & W_y
\end{pmatrix}\begin{pmatrix}
C_L\sin\alpha - C_D\cos\alpha \\ -C_L\cos\alpha - C_D\sin\alpha
\end{pmatrix}\\\label{eq29}
\leq & \sqrt{(W_x^2 + W_z^2)(C_L^2 + C_D^2)}.
\end{align}
Substituting inequality \eqref{eq29} into \eqref{eq28} gives 
\begin{align}\nonumber
P_a & \leq \frac{1}{2} \rho \|\mathbf V_a\|^2 S \Big(\sqrt{(W_x^2 + W_z^2)(C_L^2 + C_D^2)}-W_xC_t  \\\label{eq30}
    &   - \|\mathbf V_a\|C_D \Big).
\end{align}
Taking the derivative of the right hand side of equation \eqref{eq30} with respect to $\|\mathbf V_a\|$ gives that
\begin{align*}
     2\Big(\sqrt{(W_x^2 + W_z^2)(C_L^2 + C_D^2)}-W_xC_t \Big) - 3\|\mathbf V_a\|^\ast C_D = 0.
\end{align*}
That is
\begin{align}\label{eq31}
    \|\mathbf V_a\|^\ast = \frac{2}{3}\frac{\sqrt{(W_x^2 + W_z^2)(C_L^2 + C_D^2)}-W_xC_t}{C_D}.
\end{align}
Therefore, the upper bound is given by substituting equation \eqref{eq31} into equation \eqref{eq30},
\begin{align*}
    P_a \leq \frac{2}{27}\rho S\frac{(\sqrt{(W_x^2 + W_z^2)(C_L^2 + C_D^2)}-W_xC_t)^3}{C_D^2}.
\end{align*}
\end{proof}

\subsection{Small Side Slip Case}
Although the optimization problem \eqref{eq16} is difficult to solve directly, for the case in which the side slip angle is small, an analytic result can be found. 
\begin{theorem}\label{thm2}
If the side slip angle is small such that the side force coefficient $C_y$ can be approximated by the linear function,
\begin{align}\label{eq32}
    C_y \approx C_\beta\beta,
\end{align}
then, the power limit of the crosswind kite system is given by 
\begin{align}\label{eq33}
P_a & \leq \frac{1}{2} \rho S \frac{\big(\gamma_1^2 + 2C_D^2\gamma_2^2 + \gamma_1\gamma_3\big)(\gamma_1 + \gamma_3)}{27C_D^2},\\\nonumber
\gamma_1 & = \sqrt{(W_x^2 + W_z^2)(C_L^2 + C_D^2)} -W_xC_t,\\\nonumber
\gamma_2 & = \sqrt{-\frac{3C_\beta}{4C_D}W_y^2},\\\nonumber
\gamma_3 & = \sqrt{\gamma_1^2+C_D^2\gamma_2^2}.
\end{align}
\end{theorem}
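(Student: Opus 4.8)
The plan is to begin from the available-power formula \eqref{eq15}, impose the small-sideslip approximations $\cos\beta\approx 1$ and $\sin\beta\approx\beta$ together with \eqref{eq32}, and then maximize the resulting expression one variable at a time, in the order $\alpha$, then $\beta$, then $\|\mathbf V_a\|$. With these approximations $P_a$ reduces to
\begin{align*}
P_a \approx \frac{1}{2}\rho S\|\mathbf V_a\|^2\Big(&W_x(C_L\sin\alpha - C_D\cos\alpha - C_t) - W_z(C_L\cos\alpha + C_D\sin\alpha)\\
&+ W_yC_\beta\beta - \|\mathbf V_a\|C_D + \|\mathbf V_a\|C_\beta\beta^2\Big).
\end{align*}
The first step is to bound the $\alpha$-dependent trigonometric terms exactly as in the proof of Corollary \ref{col1}: the Cauchy--Schwarz inequality \eqref{eq29} lets me replace $W_x(C_L\sin\alpha - C_D\cos\alpha) - W_z(C_L\cos\alpha + C_D\sin\alpha) - W_xC_t$ by its upper bound $\gamma_1 = \sqrt{(W_x^2+W_z^2)(C_L^2+C_D^2)} - W_xC_t$.

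Next I would maximize over $\beta$. After the first step the $\beta$-dependence is confined to the term $C_\beta\big(\|\mathbf V_a\|\beta^2 + W_y\beta\big)$; in the physically relevant regime $C_\beta<0$ this is a concave quadratic in $\beta$, maximized at $\beta^\ast=-W_y/(2\|\mathbf V_a\|)$ with value $-C_\beta W_y^2/(4\|\mathbf V_a\|)$. Observing that $-C_\beta W_y^2/4 = C_D\gamma_2^2/3$, with $\gamma_2$ as in \eqref{eq33}, the bound becomes
\begin{align*}
P_a \leq \frac{1}{2}\rho S\Big(\gamma_1\|\mathbf V_a\|^2 - C_D\|\mathbf V_a\|^3 + \frac{1}{3}C_D\gamma_2^2\|\mathbf V_a\|\Big).
\end{align*}
Finally I would maximize this cubic in $V:=\|\mathbf V_a\|\ge 0$: setting the derivative to zero gives $9C_DV^2 - 6\gamma_1 V - C_D\gamma_2^2 = 0$, whose only positive root is $V^\ast=(\gamma_1+\gamma_3)/(3C_D)$ with $\gamma_3=\sqrt{\gamma_1^2+C_D^2\gamma_2^2}$ (the other root is nonpositive, and the boundary $V=0$ gives $P_a=0$, so $V^\ast$ is the maximizer). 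The cleanest way to close is to use the stationarity identity $3C_D(V^\ast)^2 = 2\gamma_1 V^\ast + \frac{1}{3}C_D\gamma_2^2$ to eliminate the cubic term, which collapses the objective to $\frac{V^\ast}{9}\big(3\gamma_1 V^\ast + 2C_D\gamma_2^2\big)$; substituting $V^\ast$ and multiplying by $\frac{1}{2}\rho S$ then produces \eqref{eq33}.

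The main obstacle is this last back-substitution: done directly it yields a cumbersome rational expression in $\gamma_1$ and $\gamma_2$, and one must apply the stationarity relation (equivalently the identity $\gamma_3^2=\gamma_1^2+C_D^2\gamma_2^2$) at just the right moment to recognize the compact form $(\gamma_1^2 + 2C_D^2\gamma_2^2 + \gamma_1\gamma_3)(\gamma_1+\gamma_3)/(27C_D^2)$. Two minor points deserve mention: as in Corollary \ref{col1}, the Cauchy--Schwarz step treats $C_L(\alpha)$ and $C_D(\alpha)$ as constants, so the stated bound is really the supremum over all admissible pairs $(C_L,C_D)$; and the $\beta$-maximization tacitly requires $C_\beta<0$, which is precisely the condition under which $\gamma_2$ in \eqref{eq33} is real.
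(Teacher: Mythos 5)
Your proposal is correct and follows essentially the same route as the paper's proof: the same small-$\beta$ linearization, the same Cauchy--Schwarz bound over $\alpha$ yielding $\gamma_1$, the same concave-quadratic maximization over $\beta$ at $\beta^\ast=-W_y/(2\|\mathbf V_a\|)$, and the same cubic maximization over $\|\mathbf V_a\|$ with positive root $(\gamma_1+\gamma_3)/(3C_D)$. The only (harmless) differences are that you posit $C_\beta<0$ as a physical assumption where the paper deduces $C_\beta\le 0$ from the sign of the power-loss term in \eqref{eq11}, and that you make the final back-substitution explicit via the stationarity identity, a step the paper leaves implicit.
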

\begin{proof}
If the side slip angle is small such that $\sin\beta \approx \beta$, $\cos\beta\approx 1$ and $C_y \approx C_\beta\beta$, then from the second term of inequality \eqref{eq11},
\begin{align*}
    C_y\sin\beta \approx C_\beta\beta^2 \leq 0.
\end{align*}
This implies that the coefficient $C_\beta$ is not greater zero, i.e.
\begin{align*}
C_\beta \leq 0.
\end{align*}
Then, the available power equation \eqref{eq15} can then be simplified as follows,
\begin{align}\nonumber
    P_a & = \frac{1}{2}\rho\|\mathbf V_a\|^2S\mathbf W_B^T\mathbf C_B + \mathbf V_a^T\mathbf A_k\\\nonumber
    & = \frac{1}{2} \rho \|\mathbf V_a\|^2 S \Big(W_x(C_L\sin\alpha - C_D\cos\alpha - C_t) + W_yC_y   \\\label{eq34}
    & - W_z(C_L\cos\alpha + C_D\sin\alpha) + \|\mathbf V_a\|(-C_D + C_\beta\beta^2)\Big).
\end{align}
Similar to equations \eqref{eq28} - \eqref{eq30}, the available power \eqref{eq34} is bounded above by,
\begin{align}\nonumber
P_a & \leq \frac{1}{2} \rho \|\mathbf V_a\|^2 S \Big(\sqrt{(W_x^2 + W_z^2)(C_L^2 + C_D^2)}-W_xC_t  \\\label{eq35}
    &  + W_yC_\beta\beta + \|\mathbf V_a\|(-C_D +C_\beta\beta^2)\Big).
\end{align}

Notice that the right hand side of $\eqref{eq35}$ is concave with respect to $\beta$ since $C_\beta$ is not positive. The maximizing side slip angle is determined by 
\begin{align*}
2\|\mathbf V_a\|C_\beta\beta^\ast + W_yC_\beta = 0,
\end{align*}
which implies
\begin{align}\label{eq36}
\beta^\ast = -\frac{W_y}{2\|\mathbf V_a\|}.
\end{align}
Substituting equation \eqref{eq36} into the right hand side of \eqref{eq35} yields,
\begin{align}\nonumber
P_a \leq & \frac{1}{2} \rho \|\mathbf V_a\|^2 S(-\|\mathbf V_a\|C_D + \sqrt{(W_x^2 + W_z^2)(C_L^2 + C_D^2)}\\\label{eq37}
& -W_xC_t - C_\beta\frac{W_y^2}{4\|\mathbf V_a\|}).
\end{align}

By $C_\beta \leq 0$, we have $\gamma_2 \geq 0$. The inequality \eqref{eq37} then can be simplified using notation $\gamma_1$ and $\gamma_2$,
\begin{align}\label{eq38} 
P_a \leq & \frac{1}{2} \rho \|\mathbf V_a\|^2 S(-\|\mathbf V_a\|C_D + \gamma_1 +\frac{C_D}{3}\frac{\gamma_2^2}{\|\mathbf V_a\|}).
\end{align}
Taking the derivative of the right hand side of \eqref{eq38} with respect to $\|\mathbf V_a\|$ gives
\begin{align}\nonumber
& \frac{\partial}{\partial \|\mathbf V_a\|}\|\mathbf V_a\|^2 (-\|\mathbf V_a\|C_D + \gamma_1 +\frac{C_D}{3}\frac{\gamma_2^2}{\|\mathbf V_a\|})\\\nonumber
= & -3C_D\|\mathbf V_a\|^2 + 2\gamma_1\|\mathbf V_a\| + \frac{C_D\gamma_2^2}{3}.
\end{align}
The optimizing apparent speed $\|\mathbf V_a\|^\ast$ satisfies the following equation,
\begin{align}\label{eq39}
-3C_D(\|\mathbf V_a\|^\ast)^2 + 2\gamma_1\|\mathbf V_a\|^\ast + \frac{C_D\gamma_2^2}{3} = 0.
\end{align}
whose roots are given by 
\begin{align*}
\|\mathbf V_a\|^\ast = \frac{\gamma_1 \pm \sqrt{\gamma_1^2+C_D^2\gamma_2^2}}{3C_D}.
\end{align*}
Clearly, the positive root should be chosen, i.e.
\begin{align}\label{eq40}
\|\mathbf V_a\|^\ast = \frac{\gamma_1 + \sqrt{\gamma_1^2+C_D^2\gamma_2^2}}{3C_D}.
\end{align}
Substituting the optimizing apparent speed \eqref{eq40} into the right hand side of inequality \eqref{eq35} gives,
\begin{align*}
P_a & \leq \frac{1}{2} \rho S \frac{\big(\gamma_1^2 + 2C_D^2\gamma_2^2 + \gamma_1\gamma_3\big)(\gamma_1 + \gamma_3)}{27C_D^2},\\\nonumber
\gamma_1 & = \sqrt{(W_x^2 + W_z^2)(C_L^2 + C_D^2)} -W_xC_t,\\\nonumber
\gamma_2 & = \sqrt{-\frac{3C_\beta}{4C_D}W_y^2},\\\nonumber
\gamma_3 & = \sqrt{\gamma_1^2+C_D^2\gamma_2^2}.
\end{align*}
\end{proof}

\section{Relations Between Power Limits}
To this end, different power limits have been obtained from the optimization problem \eqref{eq16} using different set of assumptions. 
In this section, the relation between power limits \eqref{eq20}, \eqref{eq21}, \eqref{eq27} and \eqref{eq33} will be demonstrated. For simplicity, let us take the following notations,
\begin{align}\label{eq41}
    \mathcal P_1 & =\frac{2}{27}\rho S\big(\sqrt{W_x^2 + W_z^2}\big)^3\frac{(\sqrt{C_L^2 + C_D ^2})^3}{C_D^2},\\\label{eq42}
    \mathcal P_2 & = \frac{2}{27}\rho S\big(\sqrt{W_x^2 + W_z^2}\big)^3\frac{(\sqrt{C_L^2 + (C_D + C_t)^2})^3}{C_D^2},\\\label{eq43}
    \mathcal P_3 & = \frac{2}{27}\rho S\frac{(\sqrt{(W_x^2 + W_z^2)(C_L^2 + C_D^2)}-W_xC_t)^3}{C_D^2},\\\label{eq44}
    \mathcal P_4 & = \frac{1}{2} \rho S \frac{\big(\gamma_1^2 + 2C_D^2\gamma_2^2 + \gamma_1\gamma_3\big)(\gamma_1 + \gamma_3)}{27C_D^2}.
\end{align}
It will be first shown that if the turbine drag and side forces are negligible then the power limits given in equations \eqref{eq41} - \eqref{eq44} are equivalent. Second, the order relations between these limits will be discussed. 

\begin{remark}\label{rmk2}
If $C_\beta = 0$, then $\mathcal P_3 = \mathcal P_4$. Moreover, if $C_t = 0$ and $C_\beta = 0$, then $\mathcal P_1 = \mathcal P_2 = \mathcal P_3 = \mathcal P_4$.
\end{remark}
\begin{proof}
Using definitions \eqref{eq41}-\eqref{eq43}, it is clear that if $C_t = 0$, then $\mathcal P_1 = \mathcal P_2 = \mathcal P_3$.
On the other hand, by definition \eqref{eq33},
\begin{align}\label{eq45}
   \gamma_2 = 0 \text{ if } C_\beta = 0.
\end{align}
For a power generation operation, it is reasonable to assume that $W_x \leq 0$ as shown in Figure \ref{fig1}. This indicates that 
\begin{align}\label{eq46}
\gamma_1 \geq 0.
\end{align}
Therefore, if $\gamma_2 = 0$ then \begin{align}\label{eq47}
   \gamma_3 = \gamma_1.
\end{align}
using equations \eqref{eq45} and \eqref{eq47}, equation \eqref{eq33} can then be simplified as follows,
\begin{align*}
    & \frac{1}{2} \rho S \frac{\big(\gamma_1^2 + 2C_D^2\gamma_2^2 + \gamma_1\gamma_3\big)(\gamma_1 + \gamma_3)}{27C_D^2}\\
    = & \frac{1}{2} \rho S \frac{\big(\gamma_1^2 + \gamma_1\gamma_3\big)(\gamma_1 + \gamma_3)}{27C_D^2}\\
    = & \frac{2}{27}\frac{\rho S\gamma_1^3}{C_D^2}
\end{align*}
Using the definition of $\gamma_1$, it is clear that if $C_\beta = 0$, then $\mathcal P_3 = \mathcal P_4$. Hence if $C_t = 0$ and $C_\beta = 0$, then $\mathcal P_1 = \mathcal P_2 = \mathcal P_3 = \mathcal P_4$.
\end{proof}
The above remark shows that the power limits derived in the previous section reduce to Loyd's limit if the turbine drag and side force are negligible. 

\begin{remark}\label{rmk3}
The order relations between the power limits \eqref{eq41} to \eqref{eq44} is 
$\mathcal P_1 \leq \mathcal P_2, \mathcal P_1 \leq \mathcal P_3 \leq \mathcal P_4$
Moreover, if $W_x = 0$, then $\mathcal P_3 \leq \mathcal P_2$. On the other hand, if $W_z = 0$, then $\mathcal P_2 \leq \mathcal P_3$.
\end{remark}
\begin{proof}
It is clear that $\mathcal P_1 \leq \mathcal P_2$ and $\mathcal P_1 \leq \mathcal P_3$ by definition. It is also clear that $\gamma_3$ is increasing with respect to $\gamma_2^2$. For a power generation operation, it is reasonable to assume that $W_x < 0$, which implies that the power harvest by the turbine $-W_xC_t > 0$. This also implies that $\gamma_1 > 0$. Using definition \eqref{eq44}, $\mathcal P_4$ is a therefore a strictly increasing function of $\gamma_2^2$. Hence,
\begin{align}\label{eq48}
    \min_{\gamma_2}\mathcal P_4 = \mathcal P_4|_{\gamma_2 = 0} = \mathcal P_3.
\end{align}
The second equality holds according to Remark \ref{rmk2}. By the above derivation, we have proven 
\begin{align*}
    \mathcal P_1 \leq \mathcal P_3 \leq \mathcal P_4.
\end{align*}

If $W_x = 0$, then $\mathcal P_2$ and $\mathcal P_3$ can be simplified as follows,
\begin{align}\label{eq49}
    \mathcal P_2|_{W_x = 0} & = \frac{2}{27}\rho S|W_z|^3\frac{(\sqrt{C_L^2 + (C_D + C_t)^2})^3}{C_D^2},\\\label{eq50}
    \mathcal P_3|_{W_x = 0} & = \frac{2}{27}\rho S|W_z|^3\frac{(\sqrt{C_L^2 + C_D^2})^3}{C_D^2}.
\end{align}
Therefore, it is clear that $\mathcal P_3|_{W_x = 0}  \leq \mathcal P_2|_{W_x = 0}$ by equations \eqref{eq49} and \eqref{eq50}. On the other hand, if $W_z = 0$ then $\mathcal P_2$ and $\mathcal P_3$ can be simplified as follows,
\begin{align}\label{eq51}
    \mathcal P_2|_{W_z = 0} & = \frac{2}{27}\rho S|W_x|^3\frac{(\sqrt{C_L^2 + (C_D + C_t)^2})^3}{C_D^2},\\\label{eq52}
    \mathcal P_3|_{W_z = 0} & = \frac{2}{27}\rho S\frac{(|W_x|\sqrt{(C_L^2 + C_D^2)}-W_xC_t)^3}{C_D^2}.
\end{align}

For a turbine generation operation, $W_x \leq 0$, that is, equation \eqref{eq52} can be further simplified as below,
\begin{align}\label{eq53}
    \mathcal P_3|_{W_z = 0} & = \frac{2}{27}\rho S|W_x|^3\frac{(\sqrt{(C_L^2 + C_D^2)} + C_t)^3}{C_D^2}.
\end{align}
Calculate the following square difference,
\begin{align*}
    &(\sqrt{(C_L^2 + C_D^2)} + C_t)^2 -\big( C_L^2 + (C_D + C_t)^2\big)\\
 = & 2C_t\sqrt{(C_L^2 + C_D^2)} - 2C_DC_t \geq 0
\end{align*}
It is clear that $\mathcal P_2|_{W_z = 0} \leq \mathcal P_3|_{W_z = 0}$.
\end{proof}

$\mathcal P_2$ indicates that the power limit is only determined by the magnitude of the wind velocity if $C_L$ and $C_D$ are fixed. on the other hand, $\mathcal P_3$ indicates that the wind direction also has influence on the power limit of the cross wind kites. As shown in Remark \ref{rmk3}, if $W_x = 0$, then the turbine coefficient $C_t$ has no contribution to the power limit of crosswind kites. For a cross wind kite with on-board turbine, one could argue that limit $\mathcal P_3$ is more reasonable than $\mathcal P_2$ from this observation, since wind turbine can not harvest any power from a wind perpendicular to its direction. Moreover, $\mathcal P_4$ indicates that when the side force is considered, the theoretical power limit of the cross wind kites is even higher. This is simply because nonzero side slip angle introduces nonzero side force which increases the total aerodynamic force, hereby increases the total power as defined in equation \eqref{eq12}.

\section{Real Time Power Limit}
In the previous sections, the improved power limits of the cross wind kites have been proposed, and the relations between the proposed limits and Loyd's limit have been discussed. One of the fundamental observation in this paper is that the power limit of cross wind kites is time varying with respect to wind velocity and kite aerodynamic states such as angle of attack and side slip angle. The following simple but nontrivial case implies that under certain situation, the theoretical power limit of the cross wind kites can be lower than the limits given in equation \eqref{eq41}, \eqref{eq43} and \eqref{eq44}.

\begin{corollary}
If $\alpha = \beta = 0$, then the power limit of the cross wind kite is given by 
\begin{align}\label{eq54}
     \mathcal P_0 = & \frac{2}{27}\rho S\frac{(-W_x(C_D + C_t) - W_zC_L)^3}{C_D^2}.
\end{align}
Moreover, the power limit $\mathcal P_0$ satisfies the following relation,
\begin{align}\label{eq55}
    \mathcal P_0 \leq \mathcal P_1.
\end{align}
\end{corollary}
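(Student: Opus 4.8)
The plan is to treat the corollary as a degenerate instance of the optimization problem \eqref{eq16}: freeze $\alpha = \beta = 0$, optimize only over the apparent speed, and then compare the resulting value with $\mathcal{P}_1$ by a Cauchy--Schwarz estimate.

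First I would substitute $\alpha = \beta = 0$ into the available-power formula \eqref{eq15}. Since $\beta = 0$, Assumption \ref{asm2} forces $C_y = 0$, and the trigonometric factors collapse ($\sin\alpha = \sin\beta = 0$, $\cos\alpha = \cos\beta = 1$), so every occurrence of $\alpha$ and $\beta$ disappears and \eqref{eq15} reduces to the single-variable cubic
\begin{align*}
P_a = \tfrac12\rho S\big(K\|\mathbf{V}_a\|^2 - C_D\|\mathbf{V}_a\|^3\big), \qquad K := -W_x(C_D+C_t) - W_zC_L .
\end{align*}
It is also worth recording that $\alpha = \beta = 0$ is consistent with Assumption \ref{asm1}, since then $\mathbf{V}_a = \|\mathbf{V}_a\|\,(1,0,0)^T$.

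Next I would maximize over $\|\mathbf{V}_a\| \ge 0$ exactly as in the proofs of Theorem \ref{thm1} and Corollary \ref{col1}. The stationarity condition $\|\mathbf{V}_a\|\big(2K - 3C_D\|\mathbf{V}_a\|\big) = 0$ has the nonzero root $\|\mathbf{V}_a\|^\ast = \tfrac{2K}{3C_D}$ (the root $0$ makes $P_a$ vanish), and substituting it back gives $\mathcal{P}_0 = \tfrac{2}{27}\rho S\,K^3/C_D^2$, i.e. \eqref{eq54}. Implicit here is that $K > 0$, which is exactly the condition that this pose can harvest positive power at all; otherwise the ``limit'' is $0$ and \eqref{eq55} is trivial.

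Finally, for \eqref{eq55}: since $t \mapsto t^3$ is increasing and $\mathcal{P}_1 = \tfrac{2}{27}\rho S\big((W_x^2+W_z^2)(C_L^2+C_D^2)\big)^{3/2}/C_D^2$, it suffices to show $K \le \sqrt{(W_x^2+W_z^2)(C_L^2+C_D^2)}$. Splitting $K = (-W_xC_D - W_zC_L) + (-W_xC_t)$ and applying Cauchy--Schwarz to the plane vectors $(-W_x,-W_z)$ and $(C_D,C_L)$ --- this is the $\alpha = 0$ case of \eqref{eq29} --- bounds the first bracket by $\sqrt{(W_x^2+W_z^2)(C_L^2+C_D^2)}$. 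The residual turbine term $-W_xC_t$ is the part I expect to require the most care, since it is invisible to that Cauchy--Schwarz step: when there is no on-board turbine ($C_t = 0$) it is absent and \eqref{eq55} follows immediately, and more generally the same bound already gives $\mathcal{P}_0 \le \mathcal{P}_3$ (the $-W_xC_t$ terms in $K$ and in $\gamma_1$ cancel), which combined with $\mathcal{P}_3 \le \mathcal{P}_1$ (from \eqref{eq43}) finishes the argument whenever $W_xC_t \ge 0$; the complementary regime is where the estimate is genuinely tight and where one would need an extra structural input on the size of $C_t$ relative to $C_L$ and $C_D$.
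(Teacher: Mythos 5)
Your derivation of \eqref{eq54} is correct and is essentially the paper's own proof: freeze $\alpha=\beta=0$ (so $C_y=0$ by Assumption~\ref{asm2}), reduce \eqref{eq15} to the cubic $\tfrac12\rho S(K\|\mathbf V_a\|^2-C_D\|\mathbf V_a\|^3)$ with $K=-W_x(C_D+C_t)-W_zC_L$, and maximize over the apparent speed to get $\|\mathbf V_a\|^\ast=2K/(3C_D)$ and $\mathcal P_0=\tfrac{2}{27}\rho S K^3/C_D^2$. Where you diverge is on \eqref{eq55}, and here your instincts are better than the paper's: the paper's proof of this corollary stops after establishing \eqref{eq54} and never addresses $\mathcal P_0\le\mathcal P_1$ at all. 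Your Cauchy--Schwarz step correctly gives $-W_xC_D-W_zC_L\le\sqrt{(W_x^2+W_z^2)(C_L^2+C_D^2)}$, hence $K\le\gamma_1$ and $\mathcal P_0\le\mathcal P_3$, and $\mathcal P_0\le\mathcal P_1$ whenever $C_t=0$. The residual term $-W_xC_t$ that you flag is a genuine obstruction, not just a technicality: under the paper's standing convention $W_x\le 0$ one has $-W_xC_t\ge 0$, and since $\mathcal P_3\ge\mathcal P_1$ the bound $\mathcal P_0\le\mathcal P_3$ cannot be upgraded to $\mathcal P_0\le\mathcal P_1$ for free. Indeed \eqref{eq55} fails as an unconditional statement: taking $W_x=-1$, $W_z=0$, $C_L=1$, $C_D=0.1$, $C_t=1$ gives $K=1.1>\sqrt{1.01}$, so $\mathcal P_0>\mathcal P_1$. (The failure requires $C_t>\sqrt{C_L^2+C_D^2}-C_D$, which is physically large but excluded by no assumption in the paper.) This is consistent with the paper's own Corollary~\ref{col3}, which retreats to claiming $\mathcal P_0\le\mathcal P_1$ only for $C_t=0$ and otherwise settles for $\mathcal P_0\le\mathcal P_3\le\mathcal P_4$. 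In short: your proof of \eqref{eq54} matches the paper's; \eqref{eq55} is unproved in the paper, and your partial argument together with the observation about the $-W_xC_t$ term is, as far as the inequality can be true, the correct and complete treatment.
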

\begin{proof}
If the angle of attack and side slip angle is zero, then available power of cross wind kites is given by 
\begin{align}\nonumber
    P_a(\alpha = 0,\beta = 0) & = \frac{1}{2} \rho \|\mathbf V_a\|^2 S \Big(W_x( - C_D - C_t) - W_zC_L \\ \label{eq56}
    & - \|\mathbf V_a\|C_D\Big)
\end{align}
Taking the derivative of $\mathcal P_a$ with respect to $\|\mathbf V_a\|$ is given by,
\begin{align*}
    \frac{\partial\mathcal P_a}{\partial \|\mathbf V_a\|} = -2\|\mathbf V_a\|(W_x(C_D + C_t) + W_zC_L) - 3\|\mathbf V_a\|^2C_D
\end{align*}
Then the optimum apparent speed of the kite is given by 
\begin{align}\label{eq57}
    \|\mathbf V_a\|^* = -\frac{2}{3}\frac{W_x(C_D + C_t) + W_zC_L}{C_D}.
\end{align}
Substituting equation \eqref{eq57} into \eqref{eq56} gives that 
\begin{align*}
    \mathcal P_0 & = \max_{\|\mathbf V_a\|}\mathcal P_a(\alpha = 0, \beta = 0)\\
    & =  \frac{2}{27}\rho S\frac{(-W_x(C_D + C_t) - W_zC_L)^3}{C_D^2}.
\end{align*}
\end{proof}
It is clear that, if $\alpha = \beta = 0$, then the expression of power limit of cross wind kite is simple. Moreover, it can be shown that limit \eqref{eq54} is also lower than limits given in \eqref{eq41}, \eqref{eq43} and \eqref{eq44}.

\begin{corollary}\label{col3}
If $C_t = 0$, then $\mathcal P_0 \leq \mathcal P_1$. Otherwise, if $C_t \neq 0$, then $\mathcal P_0 \leq \mathcal P_3 \leq \mathcal P_4$.
\end{corollary}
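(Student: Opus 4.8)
The plan is to deduce everything from a single application of the Cauchy--Schwarz inequality in $\mathbb{R}^2$, the monotonicity of the cube, and the order relations already recorded in Remark~\ref{rmk3}. The starting observation is that each of $\mathcal P_0,\mathcal P_1,\mathcal P_3,\mathcal P_4$ has the form $\frac{2}{27}\rho S C_D^{-2}$ times the cube of a scalar; since $\rho,S,C_D>0$ and $x\mapsto x^3$ is increasing on all of $\mathbb{R}$, it is enough to compare these scalar ``bases''.

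First I would apply Cauchy--Schwarz to the vectors $\begin{pmatrix}-W_x & -W_z\end{pmatrix}$ and $\begin{pmatrix}C_D & C_L\end{pmatrix}$ to obtain
\begin{align*}
-W_xC_D-W_zC_L \;\le\; \sqrt{(W_x^2+W_z^2)(C_L^2+C_D^2)},
\end{align*}
and then add $-W_xC_t$ to both sides:
\begin{align*}
-W_x(C_D+C_t)-W_zC_L \;\le\; \sqrt{(W_x^2+W_z^2)(C_L^2+C_D^2)}-W_xC_t .
\end{align*}
The left-hand side is exactly the base of the cube in $\mathcal P_0$ (see \eqref{eq54}). When $C_t=0$ the right-hand side is the base of the cube in $\mathcal P_1$ (see \eqref{eq41}); when $C_t\neq0$ it is the base of the cube in $\mathcal P_3$ (see \eqref{eq43}). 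Cubing and multiplying through by $\frac{2}{27}\rho S C_D^{-2}>0$ then gives $\mathcal P_0\le\mathcal P_1$ in the first case and $\mathcal P_0\le\mathcal P_3$ in the second.

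To close the case $C_t\neq0$ I would simply quote Remark~\ref{rmk3}, which asserts $\mathcal P_3\le\mathcal P_4$; chaining with the above yields $\mathcal P_0\le\mathcal P_3\le\mathcal P_4$. I do not expect a genuine obstacle here. The only point worth stating carefully is that the cube is increasing on the whole real line, so no case split on the sign of $-W_x(C_D+C_t)-W_zC_L$ is needed; the remainder is substitution into the definitions \eqref{eq41}--\eqref{eq44} together with the already-proven Remark~\ref{rmk3}. (Incidentally, this argument also supplies the proof of \eqref{eq55}, which is stated but not actually established in the preceding corollary.)
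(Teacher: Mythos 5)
Your proof is correct, but it takes a different route from the paper. The paper argues by nested maximization: it writes $\mathcal P_0 = \max_{\|\mathbf V_a\|}P_a|_{\alpha=0,\beta=0}$ and observes that this is dominated by $\max_{\|\mathbf V_a\|,\alpha}P_a|_{\beta=0}$, which it identifies with $\mathcal P_3$ (and, via Remark~\ref{rmk2}, with $\mathcal P_1$ when $C_t=0$), then quotes Remark~\ref{rmk3} for $\mathcal P_3\leq\mathcal P_4$. You instead compare the closed-form expressions \eqref{eq54}, \eqref{eq41} and \eqref{eq43} directly: a single Cauchy--Schwarz step on $(-W_x,-W_z)$ and $(C_D,C_L)$, adding $-W_xC_t$, cubing, and multiplying by the common positive prefactor. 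Both arguments are valid. The paper's version is conceptually shorter (shrinking the feasible set cannot increase a maximum), but it leans on the claim that the relaxed maximum \emph{equals} $\mathcal P_3$, whereas $\mathcal P_3$ was only derived as an upper bound (the Cauchy--Schwarz bound over $\alpha$ need not be attained once $C_L$ and $C_D$ are themselves functions of $\alpha$); the inequality still goes the right way, so the paper's proof survives, but your formula-level comparison sidesteps this entirely, makes the slack explicit as exactly the Cauchy--Schwarz defect, and, as you note, actually supplies the missing justification of \eqref{eq55} in the $C_t=0$ case. One small caution: your parenthetical about \eqref{eq55} should be read as applying only under $C_t=0$, since for $C_t\neq 0$ your inequality bounds $\mathcal P_0$ by $\mathcal P_3$, not by $\mathcal P_1$.
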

\begin{proof}
From Remark \ref{rmk3}, $\mathcal P_3 \leq \mathcal P_4$.
For conciseness, we shall first prove $\mathcal P_0 \leq \mathcal P_3$. From Corollary \ref{col1}, it is clear that
\begin{align*}
    \mathcal P_0 = \max_{\|\mathbf V_a\|}\mathcal P_a|_{\alpha = 0, \beta = 0} \leq \max_{\|\mathbf V_a\|,\alpha}\mathcal P_a|_{\beta = 0} =  \mathcal P_3
\end{align*}
Hence, if $C_t = 0$, from Remark \ref{rmk2}, the following inequality also holds,
\begin{align*}
     \max_{\|\mathbf V_a\|}\mathcal P_a|_{\alpha = 0, \beta = 0, C_t = 0} \leq \max_{\|\mathbf V_a\|,\alpha}\mathcal P_a|_{\beta = 0, C_t = 0} =  \mathcal P_1.
\end{align*}

\end{proof}

Again, due to the Assumption \ref{asm3}, there is no clear order relation between $\mathcal P_0$ and $\mathcal P_2$ as shown in the following corollary,
\begin{corollary}

\end{corollary}

However, it still suffices to argue that if the angle of attack and side slip angle is known, better estimation on the cross wind kite power limit could be obtained. 
\begin{lemma}\label{lem2}
If $\alpha$ and $\beta$ is known at certain instance $t$, then the power limit of cross wind kite at instance $t$ is given by
\begin{align}\label{eq58}
    P_a \leq \frac{2}{27}\rho S \frac{\bar W^3}{\bar C^2}
\end{align}
where $\bar W$ and $\bar C$ are given by,
\begin{align*}
\bar W = & W_x(C_L\sin\alpha - (C_D+C_t)\cos\alpha ) \\
& + W_yC_y - W_z(C_L\cos\alpha + (C_D+C_t)\sin\alpha), \\
\bar C = & -C_D\cos\beta + C_y\sin\beta.
\end{align*}
\end{lemma}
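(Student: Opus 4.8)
The plan is to reduce the problem to the same one-variable cubic optimization used throughout the paper. Once $\alpha$ and $\beta$ are known at time $t$, equation \eqref{eq6} fixes $C_L(\alpha)$ and $C_D(\alpha)$, and the side-force coefficient $C_y$ (a function of $\beta$) is fixed as well; hence every coefficient appearing in the co-linear–drag available-power expression \eqref{eq19} is frozen except the apparent speed. Writing $v := \|\mathbf V_a\|\ge 0$, collecting the $v$-independent part of the bracket in \eqref{eq19} into $\bar W$ and the coefficient of $\|\mathbf V_a\|$ inside that bracket into $\bar C = -C_D\cos\beta + C_y\sin\beta$, the available power becomes
\begin{align*}
P_a(v) = \tfrac12\rho S\big(\bar W v^2 + \bar C v^3\big),
\end{align*}
and the real-time limit is precisely $\max_{v\ge 0}P_a(v)$.

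Before optimizing I would pin down the signs that make the problem well posed. By the passivity/dissipation statement of Lemma \ref{lem1} together with Remark \ref{rmk1}, the structural loss coefficient $-C_D\cos\beta + C_y\sin\beta = \bar C$ is negative, so the cubic term dominates for large $v$ and $P_a(v)\to -\infty$; an interior maximizer therefore exists. In the power-generation regime assumed repeatedly above ($W_x\le 0$, so the turbine term contributes nonnegatively, etc.) the quadratic coefficient $\bar W$ is nonnegative, which is exactly the sign needed for the right-hand side of \eqref{eq58} to be the relevant bound.

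Then I would differentiate: $P_a'(v) = \tfrac12\rho S\,v\,(2\bar W + 3\bar C v)$, whose nonnegative critical points are $v=0$, where $P_a=0$, and $v^\ast = -\dfrac{2\bar W}{3\bar C}>0$. The second-derivative check $P_a''(v^\ast) = \tfrac12\rho S(2\bar W + 6\bar C v^\ast) = -\rho S\,\bar W < 0$ shows $v^\ast$ is the global maximizer on $v\ge 0$, and substituting it back (using $v^\ast = -2\bar W/(3\bar C)$) gives
\begin{align*}
P_a(v^\ast) = \tfrac12\rho S\big(\bar W (v^\ast)^2 + \bar C (v^\ast)^3\big) = \tfrac{2}{27}\rho S\,\frac{\bar W^3}{\bar C^2},
\end{align*}
which is the claimed inequality \eqref{eq58}.

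I do not expect a substantive obstacle here: this is the identical cubic maximization carried out in Theorems \ref{thm1} and \ref{thm2} and in the corollaries above, only with $\alpha$ and $\beta$ left as fixed parameters instead of being optimized. The one point that needs care is the sign bookkeeping — one must invoke Lemma \ref{lem1}/Remark \ref{rmk1} for $\bar C<0$ and restrict to the power-generation regime in which $\bar W\ge 0$; if instead $\bar W<0$, the true maximum over $v\ge 0$ is $0$ while the stated right-hand side is negative, so making this operating hypothesis explicit (as the statement implicitly does) is what keeps the bound correct.
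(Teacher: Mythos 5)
Your proof is correct and follows exactly the route the paper intends: the paper itself omits the argument, remarking only that Lemma \ref{lem2} ``can be proven similar to equations \eqref{eq24}--\eqref{eq26},'' and your cubic maximization in $v=\|\mathbf V_a\|$ with critical point $v^\ast=-2\bar W/(3\bar C)$ is precisely that computation. Your explicit sign bookkeeping ($\bar C<0$ from Remark \ref{rmk1} and $\bar W\ge 0$ in the power-generation regime) is a welcome addition that the paper leaves implicit.
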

Lemma \ref{lem2} can be proven similar to equation \eqref{eq24} - \eqref{eq26}, which will not be repeated here. Notice that in steady aerodynamics, the lift and drag coefficients are functions of angle of attack $\alpha$ and the side force coefficient is function of side slip angle $\beta$. Therefore, given $\alpha$ and $\beta$, $\bar W$ and $\bar C$ are also can be calculated. Therefore, in principle, Lemma \ref{lem2} gives us an approach to calculate the real time power limit of a cross wind kite using the following three steps,
\begin{enumerate}
    \item Measure the angle of attack and side slip angle of the cross wind kite.
    \item Calculate $\bar W$ and $\bar C$ using definitions.
    \item Calculate the power limit using equation \eqref{eq58}. 
\end{enumerate}

\section{Conclusion}
In this paper, the modification of the power limit formula of the crosswind kite energy systems is discussed. First, a classical aerodynamic model was presented, and its passivity was derived. Based on this passivity property, the power loss of the cross wind kite systems was obtained. A nonlinear optimization formulation, which determines the available power of the crosswind kites, was proposed. Then the classical Loyd's limit is derived using this formulation. Moreover, a generalization of the power limit formula is also derived. Then the order relations between the proposed limits and classical Loyd's limit are proven. Finally, a real time power limit calculation method is provided. 

\bibliography{RenewableEnergy}

\end{document}